\newcommand{\mket}[1]{| #1 \rangle}
\newcommand{\mbra}[1]{\langle #1 |}
\newcommand{\mbraket}[2]{\langle #1 | #2 \rangle}
\title{Generalised quantum weakest preconditions}
\author{Roman Gielerak \and Marek Sawerwain} 
\institute{
Roman Gielerak \at Institute of Control and Computation Engineering, \\
Faculty of Electrical Engineering, \\ Computer Science and Telecommunications, \\
University of Zielona G\'ora, Podg\'orna 50, 65-246 Zielona G\'ora, Poland.
\and 
Marek Sawerwain \at  Institute of Control and Computation Engineering, \\
Faculty of Electrical Engineering, \\ Computer Science and Telecommunications, \\
University of Zielona G\'ora, Podg\'orna 50, 65-246 Zielona G\'ora, Poland.
}
\date{Received: date / Accepted: date}
\begin{document}

\label{firstpage}
\maketitle

\begin{abstract}
\noindent Generalisation of the quantum weakest precondition result of D'Ho\-ndt and Panangaden is presented. In particular the most general notion of quantum predicate as positive operator valued measure (ter\-med POVM) is introduced. The previously known quantum weakest precondition result has been extended to cover the case of POVM playing the role of a quantum predicate. Additionally, our result is valid in infinite dimension case and also holds for a quantum programs defined as a positive but not necessary completely positive transformations of a quantum states.
\end{abstract}


\section{Introduction}

The formalism of 0--1 quantum predicates calculus was invented by von Neumann already in 1936 \cite{j_von_Neumann} and \cite{Mackey}. The main discovery was that the corresponding calculus significantly differs from the classical one (which is described in terms of the notion of Boolean algebra) and the development of the so--called quantum logic was achieved. In the past there were many activities in this fascinating area (for example see \cite{Piron} and \cite{DallaChiara_1977}, \cite{DallaChiara_2001}).

The recent developments in the quantum information area \cite{Nielsen_and_Chuang} renewed our interest in creating a general quantum predicate calculus in the context of the recent advances of quantum languages and quantum programming concepts \cite{Sanders2000}, \cite{Betelli}, \cite{Selinger2004} and \cite{Grattage06}. In the works \cite{Omer2003} and \cite{Omer2005} Bernard \"Omer introduced the first quantum programming language QCL. Paolo Zuliani also provides tools to compile quantum programs in \cite{Zuliani2005}. An extensive bibliographic review about the quantum programming theory and quantum languages was presented in \cite{Gay06}.

The weakest-precondition (in literature known as the weakest liberal precondition -- termed WP) is a well-known paradigm of a goal-directed programming methodology and semantics for a programming language. The weakest-precondition was developed in \cite{deBakker1} and \cite{deBakker2} and popularised in \cite{Dijkstra76}. This notion is connected with the Hoare triple $\{f_1\} P \{f_2\}$ \cite{Hoare69}, where $f_1$ and $f_2$ denote some predicates and $P$ is the program. In other words, the Hoare triple says: if $f_1$ is true for some entry state and after executing $P$ we obtain the final state, then $f_2$ is also true in the final state.

For any program $p$ and predicate $f_2$, we define the predicate $\mathrm{WP}(p, f_2)$ as
\begin{equation}
s \models \mathrm{WP}(p, f_2) \Leftrightarrow \forall_{t \in S} \;\; p(s) \longrightarrow t \Rightarrow (t \models f_2)
\label{weakest_precondition_operator}
\end{equation}
$\mathrm{WP}$ is the weakest precondition operator and the predicate $\mathrm{WP}(p, f_2)$ is the weakest one satisfying the Hoare triple $\{ \mathrm{WP}(p, f_2) \} p \{ f_2 \}$. The Hoare triple can be expressed with the $\mathrm{wp}$ operator $\models f_1 \Rightarrow \mathrm{WP}(p, f_2)$.

The strongest postcondition (termed SP) is defined by 
\begin{equation}
t \models \mathrm{SP}(p, f_1) \Leftrightarrow \exists_{s \in S} \;\; p(s) \longrightarrow t \; \wedge s \models f_1. 
\label{strongest_postcondition_operator}
\end{equation}
From the definition of the Hoare triple we obtain that $\models \mathrm{SP}(p, f_1) \Rightarrow f_2$ is equivalent to $\{f_1\} p \{f_2\}$.

In the work \cite{DHondt_Panangaden} the existence of weakest preconditions for quantum predicates defined as hermitian operators with spectral radius smaller than one was presented. However, the Kraus representation for completely positive finite dimensional superoperators was used in their proof in a very essential way. In particular the proof in \cite{DHondt_Panangaden} is valid when the following conditions are satisfied:
\begin{itemize}
\item[HP(1)] \hspace{0.15cm} the considered quantum systems are finite-di\-men\-sio\-nal,
\item[HP(2)] \hspace{0.15cm} the allowed quantum programs are defined as a completely positive transformations,
\item[HP(3)] \hspace{0.10cm} the admissible predicates are defined as hermitian operators with the operator norm smaller than one.
\end{itemize}
However, in many realistic situations all the listed assumptions HP(1)--HP(3) made in \cite{DHondt_Panangaden} are too restrictive. For example, a serious candidate for the realistic quantum computer, the computing machine with coherent pulses of light \cite{Ralph} is the point where infinite dimensional character of the corresponding quantum registers comes into play. Secondly in the so called active interpretation of predicate, the major role played by the very notion of predicates is to control the evolution of the state of quantum register\footnote{Another major role played by predicates is the role they play in the program developments as they describe state characteristic. This is so called passive (from the point of view of running program) interpretation which is important ingredient of the semantic analysis of computer programs}. There the well known problems connected with quantum measurements do occur. In particular the possible noisy character of quantum measurement is definitely excluded from the consideration by HP(3). In other words, the condition HP(3) restricts our considerations essentially to the orthodox von Neumann type of measurement only. Finally, although there are very plausible arguments in favour of completely positive maps as the only realistic transformations of the corresponding spaces of quantum states (the possible occurrence of positive maps in non-unitary quantum evolution is not definitely excluded \cite{Majewski}, \cite{Carteret_2008}).

In this paper we demonstrate result on the existence of the quantum weakest preconditions to cover the situations where none of the assumptions HP(1)--HP(3) are fulfilled. The main result is formulated precisely as Theorem (\ref{main_result}) in Sec.~(\ref{formulation_of_the_result:lbl}). What is surprising is that, the proof of our generalisation of the theorem due to \cite{DHondt_Panangaden} is very simple. The main argument is the use of Hilbert-Schmidt duality instead of Kraus representation as it was done in \cite{DHondt_Panangaden}.

\section{Formulation of the result}\label{formulation_of_the_result:lbl}

Let $\Sigma \subset \mathbb{R}^d$ be a Borel measurable subset of d-dimensional Euclidean space $\mathbb{R}^{d}$ and let $\mathcal{H}$ be a separable complex Hilbert space. $L(\mathcal{H})$ will stand for linear continuous operators on $\mathcal{H}$. The $\sigma$-algebra of sets of $\Sigma$ is denoted as $\beta(\Sigma)$.

A positive operator valued measure (POVM) on $(\Sigma, \beta(\Sigma))$ is a $\sigma$-additive map $\mathbb{F}$
\begin{equation}
\mathbb{F} : \beta(\Sigma) \longrightarrow L(\mathcal{H}) 
\end{equation}
and $F_{\Sigma} \leq \mathbb{I}_{\mathcal{H}}$, where $\mathbb{I}_{\mathcal{H}}$ is the unit operator on $\mathcal{H}$ . The space of such measurements will be denoted as $\mathrm{POVM}(\Sigma, \mathcal{H})$.

A natural partial order $\preceq$ can be defined in $\mathrm{POVM}(\Sigma, \mathcal{H})$. Let $\mathbb{F}, \mathbb{G} \in \mathrm{POVM}(\Sigma, \mathcal{H})$ then $\mathbb{F} \preceq \mathbb{G}$ iff
\begin{equation}
\forall_{A \in \beta(\Sigma)} \;\;\; \mathbb{F}(A) = {F}_A \preceq {G}_A = \mathbb{G}(A)
\end{equation}
where $\preceq$ is the natural ordering relation in $L(\mathcal{H})$ i.e.
\begin{equation}
F_A \preceq G_A \; \Leftrightarrow \; \forall_{\psi \in \mathcal{H}} \;\; \mbraket{\psi}{F_A\psi} \leq \mbraket{\psi}{G_A\psi}
\end{equation}

\begin{lemma}
For any Borel set $\Sigma \subset R^{d}$ the partially ordered space denoted as $(\mathrm{POVM}(\Sigma, \mathcal{H}), \preceq)$ is a completely partially ordered space (cpos).
\end{lemma}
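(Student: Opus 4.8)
The plan is to verify the two defining properties of a complete partial order (cpo): first that the ordered set possesses a least element (bottom), and second that every directed subset (or, equivalently for this setting, every increasing chain) admits a least upper bound inside $\mathrm{POVM}(\Sigma, \mathcal{H})$. For the bottom element I would simply take the zero measure $\mathbb{O}$ defined by $\mathbb{O}(A) = 0$ for every $A \in \beta(\Sigma)$; it is trivially a POVM (each value is the positive operator $0 \preceq \mathbb{I}_{\mathcal{H}}$, and $\sigma$-additivity is immediate), and by the definition of $\preceq$ it lies below every element of the space. This part is routine.

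The substance of the argument lies in the existence of suprema. First I would fix a directed family $\{\mathbb{F}_i\}_{i\in I}$ and, for each fixed Borel set $A$, consider the family of positive operators $\{F^{i}_A\}_{i\in I}$ in $L(\mathcal{H})$. This is an increasing, bounded-above (by $\mathbb{I}_{\mathcal{H}}$) directed family of self-adjoint operators, so by the standard monotone convergence theorem for bounded monotone nets of self-adjoint operators on a Hilbert space (convergence in the strong operator topology), the supremum $G_A := \sup_i F^{i}_A$ exists in $L(\mathcal{H})$, is positive, and satisfies $G_A \preceq \mathbb{I}_{\mathcal{H}}$. I would then define the candidate supremum $\mathbb{G}$ by $\mathbb{G}(A) = G_A$ and check, pointwise in $A$, that it is indeed an upper bound for the family and is below any other upper bound — both of which follow directly from the pointwise (operator-level) construction of the supremum together with the definition of $\preceq$ on $\mathrm{POVM}(\Sigma,\mathcal{H})$.

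The hard part, and the step I expect to be the main obstacle, is verifying that the pointwise-defined map $\mathbb{G}$ is genuinely a POVM, i.e.\ that it is $\sigma$-additive. Monotone limits and countable sums need not commute in general, so I must justify interchanging the supremum over $i$ with a countable sum over a disjoint partition $A = \bigsqcup_{n} A_n$. Here I would exploit positivity: for each $i$ one has $F^{i}_A = \sum_n F^{i}_{A_n}$ (by $\sigma$-additivity of $\mathbb{F}_i$), all terms are positive, and all families are increasing in $i$; I would test against an arbitrary vector $\psi$, reducing the operator identity to a statement about nonnegative scalar quantities $\mbraket{\psi}{F^{i}_{A_n}\psi}$, where the Monotone Convergence Theorem for sums of nonnegative reals (together with the monotone convergence of the net in $i$) permits the desired interchange of $\sup_i$ and $\sum_n$. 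Since a positive operator is determined by its quadratic form, this establishes $G_A = \sum_n G_{A_n}$ in the strong sense, yielding $\sigma$-additivity of $\mathbb{G}$.

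Putting these pieces together, $\mathbb{G} \in \mathrm{POVM}(\Sigma,\mathcal{H})$ is the least upper bound of the directed family, and together with the existence of the bottom element $\mathbb{O}$ this shows $(\mathrm{POVM}(\Sigma,\mathcal{H}), \preceq)$ is a cpo. Throughout, the separability of $\mathcal{H}$ is convenient for reducing net arguments to sequences where desired, but the essential ingredients are the order-completeness of bounded monotone families of positive operators and the scalar monotone convergence theorem used to commute the limit with the countable additivity.
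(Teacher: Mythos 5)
Your proposal is correct and follows essentially the same route as the paper: the least upper bound is built pointwise in $A$ from the quadratic forms $\mbraket{\psi}{F^{(\alpha)}_A\psi}$ (the paper's appeal to uniform boundedness plus polarisation is exactly the operator monotone convergence theorem you invoke), and $\sigma$-additivity of the resulting map is obtained by interchanging the supremum over the directed index with the countable sum over the partition. If anything, your justification of that interchange via scalar monotone convergence of nonnegative terms is cleaner than the paper's vague citation of a ``version of the dominated convergence theorem,'' and your explicit exhibition of the bottom element $\mathbb{O}$ supplies a point the paper leaves unaddressed.
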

\begin{proof}
Let $(\mathbb{F}^{(\alpha)})_{\alpha \in A}$ be any $\alpha$-ordered net in $\mathrm{POVM}(\Sigma, \mathcal{H})$. For any $\mket{\psi} \in \mathcal{H}$, $A \in \Sigma$ we define
\begin{equation}
\mbraket{\psi}{G_A\psi} = \sup_{\alpha} \mbraket{\psi}{F^{(\alpha)}_A\psi} .
\label{supremum_of_g}
\end{equation}
From the assumption on the uniform boundedness of elements from $\mathrm{POVM}(\Sigma, \mathcal{H})$ and the polarisation identities it follows that (\ref{supremum_of_g}) defines an operator $G_A$ and such that $|| G_{A} || \leq 1$. Let $(A_n)_{n=1,\ldots,\infty}$ be any family of pairwise disjoint subsets of $\beta(\Sigma)$, then the use of the version of the dominated convergence theorem allows us to formulate 
\begin{equation}
	G_{\bigcup_{n=1}^{\infty}A_n} = \Sigma^{\infty}_{n=1} G_{A_n}
\end{equation}
in the strong sense because of uniform boundedness in operator norm topology.

In conclusion, the operator valued map
\begin{equation}
A \in \beta(\Sigma) \longrightarrow G_A \in L(\mathcal{H})
\end{equation}
defines a POVM as shown above.
\end{proof}

By the (generalised)  complete quantum predicate we mean an arbitrary element $\mathbb{F} \in \mathrm{POVM}(\Sigma, \mathcal{H})$ which is complete. This means that $F_{\Sigma}=\mathbb{I}_{\mathcal{H}}$, where $\mathbb{I}_{\mathcal{H}}$ is the unit operator on $\mathcal{H}$. From the additivity of $\mathbb{F}$ it follows that for any $A \in \Sigma$, $\mathbb{F}_{A} \leq \mathbb{I}_H$ in the sense that:
\begin{equation}
\forall_{\mket{\psi} \in \mathcal{H}} \;\;\; \mbraket{\psi}{F_A \psi} \leq \mbraket{\psi}{\psi}
\end{equation}
from which it follows that for any $A \in \Sigma$ the operator norm $|| F_A ||$ of $F_A$ obeys $|| F_A || \leq 1$.

For $G \in L(\mathcal{H})$ let
\begin{equation}
r(G) = \sup \{ \, |\lambda| \; | \; \lambda \in \sigma(G) \, \}
\end{equation}
where $\sigma(G)$ means the spectrum of $G$. The number $r(G)$ is called the spectral norm of $G$. It is known that in the case of selfadjoint $G$ the spectral norm of $G$ is equal to the operator norm of $G$ \cite{Simon1970}.

Summarising our discussion let us note the following lemma:
\begin{lemma}
Let $\mathbb{F} \in \mathrm{POVM}(\Sigma, \mathcal{H})$ be a quantum predicate. Then for any $A \in \Sigma$:
\begin{equation}
|| F_A || = r(F_A) \;\;\; \mathrm{and} \;\;\; || F_A || \leq 1.
\end{equation}
\label{lbl:lemma:norm:predicate}
\end{lemma}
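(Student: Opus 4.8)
The plan is to assemble the two assertions from the structural properties of a POVM that have already been recorded: the positivity (hence selfadjointness) of each $F_A$, the cited identity between spectral and operator norm for selfadjoint operators, and the order bound $F_A \preceq \mathbb{I}_{\mathcal{H}}$ coming from completeness together with additivity. No genuinely new machinery is needed; the lemma is a consolidation of facts that are each either definitional or classical.

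First I would observe that, by the very definition of a positive operator valued measure, each $F_A$ is a positive operator, i.e. $\mbraket{\psi}{F_A \psi} \geq 0$ for every $\mket{\psi} \in \mathcal{H}$. On a complex Hilbert space the quadratic form $\mket{\psi} \mapsto \mbraket{\psi}{F_A \psi}$ is then real-valued, and by the polarisation identity this forces $F_A$ to be selfadjoint, $F_A = F_A^{\dagger}$. With selfadjointness in hand I would invoke the cited fact \cite{Simon1970} that for a selfadjoint element of $L(\mathcal{H})$ the spectral norm coincides with the operator norm, which gives the first assertion $|| F_A || = r(F_A)$ immediately.

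For the bound $|| F_A || \leq 1$ I would use completeness, $F_{\Sigma} = \mathbb{I}_{\mathcal{H}}$, together with additivity: writing $\Sigma = A \cup (\Sigma \setminus A)$ as a disjoint union and applying $\sigma$-additivity of $\mathbb{F}$ yields $F_A + F_{\Sigma \setminus A} = F_{\Sigma} = \mathbb{I}_{\mathcal{H}}$. Since $F_{\Sigma \setminus A}$ is itself positive, this gives $F_A \preceq \mathbb{I}_{\mathcal{H}}$, that is $\mbraket{\psi}{F_A \psi} \leq \mbraket{\psi}{\psi}$ for all $\mket{\psi}$. Taking the supremum over unit vectors — legitimate for a positive selfadjoint operator, whose norm is attained on its quadratic form, $|| F_A || = \sup_{|| \psi || = 1} \mbraket{\psi}{F_A \psi}$ — then delivers $|| F_A || \leq 1$, exactly as already noted in the text preceding the statement.

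I expect no serious obstacle here. The only points requiring a moment's care are the passage from positivity to selfadjointness, which is specific to the complex scalar field, and the identification $|| F_A || = \sup_{|| \psi || = 1} \mbraket{\psi}{F_A \psi}$ for positive operators; both are standard and are in any case implicit in the surrounding discussion. The remaining work is purely a matter of citing the spectral-norm identity and invoking the additivity of the measure already established for $\mathbb{F}$.
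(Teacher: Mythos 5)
Your proposal is correct and follows essentially the same route as the paper, which presents this lemma without a separate proof as a summary of the immediately preceding discussion: there the authors derive $F_A \preceq \mathbb{I}_{\mathcal{H}}$ (hence $\|F_A\|\leq 1$) from completeness and additivity, and appeal to the cited fact that spectral and operator norms coincide for selfadjoint operators. You merely make explicit two steps the paper leaves implicit (positivity implies selfadjointness via polarisation, and $\|F_A\|=\sup_{\|\psi\|=1}\mbraket{\psi}{F_A\psi}$ for positive operators), which is a harmless and accurate elaboration.
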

The set of all not necessary complete POVM on $(\Sigma, \mathcal{H})$ will be called the space of (generalised) quantum predicates and denoted as  $\mathrm{Pre}(\Sigma, \mathcal{H})$ and some times abbreviated as $\mathrm{Pre}(\mathcal{H})$ in the following.

\begin{remark}
In the paper \cite{DHondt_Panangaden} the assumption that the spectral norm of the hermitean operator (playing the role of quantum predicate there) is less or equal to one has been formulated. As we have remarked this is equivalent to the assumption that the operator norm is not exceding the value 1.
\end{remark}

For a given separable Hilbert space $\mathcal{H}$ the corresponding space of states $E(\mathcal{H})$ is usually defined as the set of non-negative, trace-class operators $\rho$ such that $\mathrm{Tr}(\rho) =  1$. A~space of admissible transformations of the space $E(\mathcal{H})$ is defined as the space of linear positive maps:
\begin{equation}
\mathbb{C}: \; E(\mathcal{H}) \longrightarrow E(\mathcal{H})
\end{equation}
that are trace preserving.

Any such map will be called (generalised) quantum program and the space of all such maps will be denoted as $\mathrm{QP}(\mathcal{H})$.

It is well known that the ring of trace-class operators on $\mathcal{H}$ denoted as $L_1(\mathcal{H})$ forms a two-sided $\star$-ideal in the $C^{\star}$-algebra $L(\mathcal{H})$ and therefore for any $\rho \in E(\mathcal{H})$, $\mathbb{F} \in \mathrm{Pre}(\Sigma, \mathcal{H})$ and $C \in \mathrm{QP}(\mathcal{H})$:
\begin{equation}
\mathrm{Tr}_{\mathcal{H}} ( \mathbb{F}_{A}C(\rho) ) \leq 1
\label{lbl:eq:trace:predicate}
\end{equation}

To prove (\ref{lbl:eq:trace:predicate}) we remark that any $A \in L(\mathcal{H})$ and $\rho \in L_1(\mathcal{H})$ the following inequality holds (Simon 1970, p.~218):
\begin{equation}
	{|| A\rho ||}_{1} \leq || A || \cdot {|| \rho ||}_1 .
\end{equation}
Thus, taking into account Lemma (\ref{lbl:lemma:norm:predicate}) the proof of (\ref{lbl:eq:trace:predicate}) follows.

The equation (\ref{lbl:eq:trace:predicate}) allows us to formulate the following lemma.

\begin{lemma}
For any $C \in \mathrm{QP}(\mathcal{H})$ the action $C^{\star}$ on $\mathrm{Pre}(\Sigma, \mathcal{H})$ is defined by:
\begin{equation}
\mathbb{F} \in \mathrm{Pre}(\Sigma, \mathcal{H}) \rightarrow C^{\star}\mathbb{F} \in \mathrm{Pre}(\Sigma, \mathcal{H})
\end{equation}
where
\begin{equation}
\forall_{\rho \in E(\mathcal{H})} \; \mathrm{Tr}((C^{\star}\mathbb{F}_{A})\rho) = \mathrm{Tr}(\mathbb{F}_A C(\rho))
\end{equation}
is action of $\mathrm{QP}(\mathcal{H})$ on the space $\mathrm{Pre}(\Sigma, \mathcal{H})$.
\end{lemma}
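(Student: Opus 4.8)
The plan is to fix a quantum program $C \in \mathrm{QP}(\mathcal{H})$ and a predicate $\mathbb{F} \in \mathrm{Pre}(\Sigma, \mathcal{H})$ and, for each Borel set $A \in \beta(\Sigma)$, to \emph{construct} the operator $C^{\star}\mathbb{F}_A$ by Hilbert--Schmidt (trace) duality rather than by any explicit formula. First I would note that $C$, being positive and trace-preserving on $E(\mathcal{H})$, extends by linearity (via the decomposition of a trace-class operator into its self-adjoint and positive parts) to a linear map on $L_1(\mathcal{H})$, so that
\begin{equation}
\rho \in L_1(\mathcal{H}) \longmapsto \mathrm{Tr}(\mathbb{F}_A C(\rho))
\end{equation}
is a well-defined linear functional. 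Using the inequality ${|| A\rho ||}_{1} \leq || A || \cdot {|| \rho ||}_1$ quoted above together with Lemma (\ref{lbl:lemma:norm:predicate}), which gives $|| F_A || \leq 1$, and the fact that $|| C(\rho) ||_1 = \mathrm{Tr}(C(\rho)) = \mathrm{Tr}(\rho)$ on the positive cone, this functional is bounded. Since the Banach-space dual of $L_1(\mathcal{H})$ is exactly $L(\mathcal{H})$ under the pairing $(\rho, B) \mapsto \mathrm{Tr}(B\rho)$, there is a \emph{unique} bounded operator, which we name $C^{\star}\mathbb{F}_A$, obeying the defining identity $\mathrm{Tr}((C^{\star}\mathbb{F}_A)\rho) = \mathrm{Tr}(\mathbb{F}_A C(\rho))$ for all $\rho$. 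This settles well-definedness of $C^{\star}$.

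Next I would verify that the family $A \mapsto C^{\star}\mathbb{F}_A$ is again a (not necessarily complete) POVM. Positivity is the easiest point: testing the defining identity against the unnormalised state $\rho = \mket{\psi}\mbra{\psi}$ gives $\mbraket{\psi}{(C^{\star}\mathbb{F}_A)\psi} = \mathrm{Tr}(\mathbb{F}_A C(\mket{\psi}\mbra{\psi}))$, and since $C$ is positive the operator $C(\mket{\psi}\mbra{\psi})$ is non-negative, so the trace of its product with the non-negative operator $\mathbb{F}_A$ is $\geq 0$; hence $0 \preceq C^{\star}\mathbb{F}_A$. The bound on the total mass follows from trace preservation: for any state $\rho$,
\begin{equation}
\mathrm{Tr}((C^{\star}\mathbb{F}_{\Sigma})\rho) = \mathrm{Tr}(\mathbb{F}_{\Sigma} C(\rho)) \leq \mathrm{Tr}(C(\rho)) = \mathrm{Tr}(\rho),
\end{equation}
where the inequality uses $F_{\Sigma} \preceq \mathbb{I}_{\mathcal{H}}$ and $C(\rho) \succeq 0$; since this holds for all states (in particular all rank-one ones) it yields the operator inequality $C^{\star}\mathbb{F}_{\Sigma} \preceq \mathbb{I}_{\mathcal{H}}$, as required of an element of $\mathrm{Pre}(\Sigma, \mathcal{H})$.

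Finally I would check $\sigma$-additivity. For pairwise disjoint $(A_n)$ with union $A$, the (strong) $\sigma$-additivity of $\mathbb{F}$ established in the first lemma, after pairing with $C(\rho)$ and using linearity and continuity of the trace, should give
\begin{equation}
\mathrm{Tr}((C^{\star}\mathbb{F}_A)\rho) = \sum_{n=1}^{\infty} \mathrm{Tr}((C^{\star}\mathbb{F}_{A_n})\rho)
\end{equation}
for every $\rho \in E(\mathcal{H})$, which by the duality transports the additivity of $\mathbb{F}$ to $C^{\star}\mathbb{F}$ in the weak, and hence by uniform boundedness in the strong, operator sense. I expect the main obstacle to be precisely this interchange of the infinite sum with the trace pairing: one must justify passing the limit through $\mathrm{Tr}(\mathbb{F}_{(\cdot)} C(\rho))$, which is exactly where a dominated-convergence / uniform-boundedness argument (as in the proof that $(\mathrm{POVM}(\Sigma, \mathcal{H}), \preceq)$ is a cpos) is needed, and to ensure that the resulting convergence holds in the operator topology demanded of a POVM rather than merely pointwise on states. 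The positivity and the norm bound are by comparison routine.
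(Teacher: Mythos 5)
Your construction is correct, and it reaches the same object as the paper but by a slightly different technical route. The paper's proof first uses the spectral theorem to reduce $\rho$ to a rank-one projector $\mket{\psi}\mbra{\psi}$, reads the identity $\mbraket{\psi}{(C^{\star}\mathbb{F}_A)\psi}=\mathrm{Tr}(F_A C(\mket{\psi}\mbra{\psi}))$ as a bounded quadratic form on $\mathcal{H}$, and recovers the operator $C^{\star}\mathbb{F}_A$ by polarisation; you instead invoke the Banach-space duality $L_1(\mathcal{H})^{*}\cong L(\mathcal{H})$ and represent the bounded linear functional $\rho\mapsto\mathrm{Tr}(\mathbb{F}_A C(\rho))$ directly. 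The two devices are interchangeable here (both ultimately rest on the trace pairing and the bound $\|F_A\rho\|_1\le\|F_A\|\,\|\rho\|_1$), but your version has two advantages: it makes explicit the linear extension of $C$ from $E(\mathcal{H})$ to all of $L_1(\mathcal{H})$, which the paper's polarisation step silently requires as well (since $\mket{\psi}\mbra{\phi}$ is not a state), and it actually verifies the POVM axioms for $C^{\star}\mathbb{F}$ --- positivity via rank-one tests and the normalisation bound $C^{\star}\mathbb{F}_{\Sigma}\preceq\mathbb{I}_{\mathcal{H}}$ via trace preservation --- which the paper does not spell out. On $\sigma$-additivity both arguments are at the same level of completeness: the paper declares it ``easy to prove,'' and you correctly identify the interchange of the infinite sum with the trace pairing (justified by dominated convergence and uniform boundedness, as in the cpos lemma) as the one point that genuinely needs the argument; your sketch of it is adequate.
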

\begin{proof}
From $\rho \in L_1(\mathcal{H})$ and the spectral theorem it follows $\rho = \Sigma_{n} \lambda_n \mket{\psi_h}\mbra{\psi_n}$, $\lambda_n \geq 0$, $\lim_{n \to \infty} \lambda_n = 0$ and it is enough to assume that $\rho  = \mket{\psi}\mbra{\psi}$ for some $\mket{\psi} \in \mathcal{H}$ with  $| \mbraket{\psi}{\psi} | = 1$.

Therefore, the polarisation identities shows that the identities:
\begin{equation}
\mbraket{\psi}{(C^{\star}\mathbb{F}_{A})\psi} = \mathrm{Tr}(F_AC(\mket{\psi}\mbra{\psi}))
\end{equation}
define a bounded operator $C^{\star}\mathbb{F}$ for any $A \in \beta(\Sigma)$. The $\sigma$-additivity of $C^{\star}\mathbb{F}$ is also easy to prove.
\end{proof}

The duality between $L(\mathcal{H})$ and $L_1(\mathcal{H})$ based on $\mathrm{Tr}$ will be called Hilbert-Schmidt duality and will be also denoted as
\begin{equation}
{\mbraket{ \cdot }{ \cdot }}_{HS} : (\rho, A) \in L_1(\mathcal{H}) \times L(\mathcal{H})  \longrightarrow {\mbraket{\rho}{A}}_{HS} = \mathrm{Tr}(\rho A)
\end{equation}

\begin{definition}
For a given $\mathcal{H}$, $\mathbb{F} \in \mathrm{Pre}(\Sigma, \mathcal{H})$, $\rho \in E(\mathcal{H})$ the function 
\begin{equation}
\mathrm{sat}(\rho, \mathbb{F}) : A \in \Sigma \longrightarrow \mathrm{sat}(\rho, F_A) = \mathrm{Tr}(\rho F_A)
\end{equation}
will be called the satisfiability of the quantum predicate $\mathbb{F}$ in the state $\rho$. In particular the state $\rho$ satisfies the predicate $\mathbb{F}$ iff the function $\mathrm{sat}(\rho, F)$ is nonzero positive valued.
\label{sat_def}
\end{definition}

From Def.~(\ref{sat_def}) it follows that the function of satisfiability $\mathrm{sat}(\rho, \mathbb{F})$ is always a bounded measure on $\beta(\Sigma)$.

\begin{definition}
An $s$-order, denoted as $\stackrel{s}{\preceq}$ is defined on the space of quantum predicates $\mathrm{Pre}(\Sigma, \mathcal{H})$ in the following way:
\begin{equation}
\mathbb{F} \stackrel{s}{\preceq} \mathbb{G} \;  \mathrm{iff} \; \forall_{A \in \beta(\Sigma)} \; \forall_{\rho \in E(\mathcal{H})} \; \mathrm{sat}(\rho, \mathbb{F})(A) \leq \mathrm{sat}(\rho, \mathbb{G})(A)
\end{equation}
\end{definition}

Similarly, it can be proved that the semi-ordered space $(\mathrm{Pre}(\Sigma, \mathcal{H}), \stackrel{s}{\preceq})$ is a cpos.

\begin{lemma}
For $\Sigma \subset \mathbb{R}^{d}$ and separable Hilbert space $\mathcal{H}$ the semi-ordered space denoted as $(\mathrm{Pre}(\Sigma, \mathcal{H}), \stackrel{s}{\preceq})$ is completely partially ordered space.
\end{lemma}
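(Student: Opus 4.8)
The plan is to reduce the statement to the first lemma by exploiting the fact that, as sets, $\mathrm{Pre}(\Sigma,\mathcal{H})$ and $\mathrm{POVM}(\Sigma,\mathcal{H})$ coincide: both consist of the $\sigma$-additive operator-valued maps $\mathbb{F}$ with $F_\Sigma \le \mathbb{I}_\mathcal{H}$, the word ``complete'' playing only a terminological role. Consequently it suffices to show that on this common underlying set the $s$-order $\stackrel{s}{\preceq}$ coincides with the natural operator order $\preceq$ of the first lemma. Once the two orders are identified, the cpos property of $(\mathrm{Pre}(\Sigma,\mathcal{H}),\stackrel{s}{\preceq})$ is immediate from the already established cpos property of $(\mathrm{POVM}(\Sigma,\mathcal{H}),\preceq)$.

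First I would verify the easier inclusion. Suppose $\mathbb{F}\preceq\mathbb{G}$, so that $F_A\preceq G_A$ in $L(\mathcal{H})$ for every $A\in\beta(\Sigma)$. Writing an arbitrary $\rho\in E(\mathcal{H})$ through its spectral decomposition $\rho=\sum_n\lambda_n\mket{\psi_n}\mbra{\psi_n}$ with $\lambda_n\ge 0$ and $\sum_n\lambda_n=1$, linearity of the trace gives $\mathrm{Tr}(\rho F_A)=\sum_n\lambda_n\mbraket{\psi_n}{F_A\psi_n}$, and term by term $\mbraket{\psi_n}{F_A\psi_n}\le\mbraket{\psi_n}{G_A\psi_n}$. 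Summing yields $\mathrm{sat}(\rho,\mathbb{F})(A)\le\mathrm{sat}(\rho,\mathbb{G})(A)$, i.e. $\mathbb{F}\stackrel{s}{\preceq}\mathbb{G}$. For the converse, assume $\mathbb{F}\stackrel{s}{\preceq}\mathbb{G}$ and specialise the defining inequality to the pure states $\rho=\mket{\psi}\mbra{\psi}$ with $\mbraket{\psi}{\psi}=1$. Since $\mathrm{Tr}(\mket{\psi}\mbra{\psi}F_A)=\mbraket{\psi}{F_A\psi}$, this gives $\mbraket{\psi}{F_A\psi}\le\mbraket{\psi}{G_A\psi}$ for every normalised $\psi$, and by homogeneity of the quadratic form for every $\psi\in\mathcal{H}$. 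This is exactly $F_A\preceq G_A$ for all $A$, i.e. $\mathbb{F}\preceq\mathbb{G}$. Hence the two orders are identical on $\mathrm{Pre}(\Sigma,\mathcal{H})$.

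Finally I would invoke the first lemma: any $\alpha$-ordered net $(\mathbb{F}^{(\alpha)})$ in $(\mathrm{Pre}(\Sigma,\mathcal{H}),\stackrel{s}{\preceq})$ is, by the equivalence just proved, an $\alpha$-ordered net in $(\mathrm{POVM}(\Sigma,\mathcal{H}),\preceq)$, whose least upper bound $\mathbb{G}$ is constructed there through the pointwise suprema $\mbraket{\psi}{G_A\psi}=\sup_\alpha\mbraket{\psi}{F^{(\alpha)}_A\psi}$; this $\mathbb{G}$ is then also the supremum in the $s$-order. I expect the only genuine subtlety to be precisely the one already discharged in that earlier proof, namely that the pointwise supremum of the quadratic forms is realised by an honest bounded, $\sigma$-additive operator-valued measure, which rests on the uniform boundedness of the net, the polarisation identities, and the dominated convergence theorem. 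The order-equivalence established above is what allows that functional-analytic construction to be transported verbatim into the $s$-order setting, so no new estimates are required.
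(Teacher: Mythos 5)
Your proof is correct. The paper in fact gives no proof of this lemma at all---it merely remarks that the claim ``can be proved similarly'' to the cpos lemma for $(\mathrm{POVM}(\Sigma,\mathcal{H}),\preceq)$---and your reduction, via the observation that specialising to pure states $\rho=\mket{\psi}\mbra{\psi}$ gives $\mathrm{Tr}(\rho F_A)=\mbraket{\psi}{F_A\psi}$ so that the $s$-order coincides with the operator order $\preceq$, is exactly the right way to make that remark precise, after which the supremum construction of the first lemma transports verbatim.
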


For a given quantum predicate $\mathbb{F} \in \mathrm{Pre}(\Sigma, \mathcal{H})$ the set of preconditions for $\mathbb{F}$ with respect to quantum program $C \in \mathrm{QP}(\mathcal{H})$ and denoted as $\{C\}(\mathbb{F})$ is defined as:
\begin{equation}
 \{ C \}(\mathbb{F}) = \{ \mathbb{G} \in \mathrm{Pre}(\Sigma, \mathcal{H}) \} : \mathbb{G}  \stackrel{s}{\preceq} \mathbb{F} \}
\end{equation}

\begin{definition}
A weakest precondition for a predicate $\mathbb{F} \in \mathrm{Pre}(\Sigma, \mathcal{H})$ with respect to a quantum program $C \in \mathrm{QP}(\mathcal{H})$ denoted (if it exists) as $\mathrm{WP}(C)(\mathbb{F})$ is the predicate $\mathbb{G} \in \mathrm{Pre}(\Sigma, \mathcal{H})$ such that $\mathbb{G} = \mathrm{WP}(C)(\mathbb{F}) = \sup( \{C\}\mathbb{F})$.
\end{definition}

\begin{theorem}
Let $\mathcal{H}$ be a separable Hilbert space and let $C \in \mathrm{QP}(\mathcal{H})$ be a given quantum program and let $\mathbb{F} \in  \mathrm{Pre}(\Sigma, \mathcal{H})$. Then there exists unique $\mathbb{G} \in \mathrm{Pre}(\Sigma, \mathcal{H})$ such that
\begin{equation}
	\mathbb{G} = \mathrm{WP}(C)\mathbb{F}
\end{equation}
\label{main_result}
\end{theorem}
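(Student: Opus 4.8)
The plan is to exhibit the weakest precondition explicitly as $C^{\star}\mathbb{F}$ and to verify that it is in fact the \emph{maximum} element of the set $\{C\}(\mathbb{F})$. Once this is shown, existence of the supremum is immediate and uniqueness follows from the antisymmetry of the $s$-order. The only structural ingredient beyond the definitions is the Hilbert--Schmidt duality identity, exactly as advertised in the introduction; the argument is short precisely because it replaces the Kraus-representation manipulations of \cite{DHondt_Panangaden} by this duality.

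First I would set $\mathbb{G} := C^{\star}\mathbb{F}$. By the lemma describing the action of $C^{\star}$ on $\mathrm{Pre}(\Sigma, \mathcal{H})$, this is again an element of $\mathrm{Pre}(\Sigma, \mathcal{H})$, hence an admissible generalised quantum predicate. The crux is then the chain of equalities, valid for every $\rho \in E(\mathcal{H})$ and every $A \in \beta(\Sigma)$,
\begin{equation}
\mathrm{sat}(\rho, C^{\star}\mathbb{F})(A) = \mathrm{Tr}((C^{\star}F_A)\rho) = \mathrm{Tr}(F_A C(\rho)) = \mathrm{sat}(C(\rho), \mathbb{F})(A),
\end{equation}
where the middle equality is precisely the Hilbert--Schmidt duality built into the definition of $C^{\star}$. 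This identity says that the satisfiability of $\mathbb{G}$ in the input state $\rho$ equals the satisfiability of $\mathbb{F}$ in the output state $C(\rho)$, which is exactly the defining condition for $\mathbb{G}$ to be a precondition of $\mathbb{F}$ relative to $C$ (the quantum Hoare triple $\{\mathbb{G}\}\,C\,\{\mathbb{F}\}$). Hence $C^{\star}\mathbb{F} \in \{C\}(\mathbb{F})$.

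Next I would show that $C^{\star}\mathbb{F}$ dominates every competitor. Let $\mathbb{H} \in \{C\}(\mathbb{F})$ be arbitrary. By the membership condition, $\mathrm{sat}(\rho, \mathbb{H})(A) \leq \mathrm{sat}(C(\rho), \mathbb{F})(A)$ for all $\rho$ and $A$; combining this with the displayed identity gives $\mathrm{sat}(\rho, \mathbb{H})(A) \leq \mathrm{sat}(\rho, C^{\star}\mathbb{F})(A)$ for all $\rho, A$, i.e. $\mathbb{H} \stackrel{s}{\preceq} C^{\star}\mathbb{F}$. Thus $C^{\star}\mathbb{F}$ is an upper bound of $\{C\}(\mathbb{F})$ which itself lies in $\{C\}(\mathbb{F})$, so it is the maximum and therefore $\sup(\{C\}\mathbb{F}) = C^{\star}\mathbb{F}$. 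In particular the supremum exists, which is the assertion of the theorem; note that the cpos completeness is not even needed for bare existence here, since the maximum is attained explicitly.

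Finally, for uniqueness I would invoke antisymmetry of $\stackrel{s}{\preceq}$. If $\mathbb{G}'$ were a second supremum, then $\mathbb{G} \stackrel{s}{\preceq} \mathbb{G}'$ and $\mathbb{G}' \stackrel{s}{\preceq} \mathbb{G}$, so $\mathrm{Tr}(G_A \rho) = \mathrm{Tr}(G'_A \rho)$ for every state $\rho$ and every $A$. Specialising to rank-one states $\rho = \mket{\psi}\mbra{\psi}$ yields $\mbraket{\psi}{G_A \psi} = \mbraket{\psi}{G'_A\psi}$ for all unit vectors $\psi$, and the polarisation identities then force $G_A = G'_A$ for every $A$, hence $\mathbb{G}' = \mathbb{G}$. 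I do not anticipate a genuine obstacle anywhere: the single nontrivial point is the well-definedness of the dual map $C^{\star}$ on predicates, and that is already secured by the trace-class ideal estimate ${|| A\rho ||}_1 \leq || A || \cdot {|| \rho ||}_1$ together with Lemma~(\ref{lbl:lemma:norm:predicate}); everything else reduces to the reflexivity, transitivity and antisymmetry of the satisfiability ordering under the trace pairing.
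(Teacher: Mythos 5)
Your proposal is correct and follows essentially the same route as the paper's own proof: both define $\mathbb{G} = C^{\star}\mathbb{F}$ via the Hilbert--Schmidt duality identity $\mathrm{Tr}((C^{\star}F_A)\rho) = \mathrm{Tr}(F_A C(\rho))$ and then show that $C^{\star}\mathbb{F}$ is a majorant of $\{C\}(\mathbb{F})$ that itself belongs to that set, so the supremum is attained as a maximum. The only difference is that you additionally spell out the uniqueness argument via antisymmetry and polarisation, which the paper leaves implicit.
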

\begin{proof}
By taking $A \in \beta(\Sigma)$, we can write
\begin{equation}
\mathrm{Tr}(F(A)C(\rho)) = {\mbraket{F_A}{C\rho}}_{HS(\mathcal{H})} = {\mbraket{C^{\star}F_A}{\rho}}_{HS(\mathcal{H})} .
\end{equation}
By the Hilbert-Schmidt duality we can define a new $\mathrm{POVM}(C^{\star}\mathbb{F})$ on $(\Sigma, \mathcal{H})$ by the last identity. Thus, we can expect that $C^{\star}\mathbb{F} = \mathrm{WP}(C)\mathbb{F}$. Let $\mathbb{H} \in \{C\}(\mathbb{F})$, then for some $A \in \beta(\Sigma)$
\begin{equation}
\mathrm{Tr}(\mathbb{H}_{A}\rho) \leq \mathrm{Tr}(F_AC(\rho)) = {\mbraket{F_A}{C\rho}}_{HS} = {\mbraket{C^{\star}F_A}{\rho}}_{HS} = \mathrm{Tr}(C^{\star}\mathbb{F}_A)\rho
\end{equation}
and
\begin{equation}
\mathrm{sat}(\mathbb{H}, \rho) \leq \mathrm{sat}(C^{\star}\mathbb{F}, \rho)
\end{equation}
then $C^{\star}\mathbb{F}$ is majorising for the set $\{C\}\mathbb{F}$. Obviously, from the very construction of $C^{\star}\mathbb{F}$ it follows that $C^{\star}\mathbb{F} \in \{C\}\mathbb{F}$.
\end{proof}

\begin{remark}
The action of $\mathrm{CP}(\mathcal{H})$ on the spaces $\mathrm{POVM}(\mathcal{H})$ were studied more carefully in \cite{Buscemi} and some very interesting results on this were obtained. Whether those results can be extended to the action of $\mathrm{QP}(\mathcal{H})$ and whether this kind of results could be efficient in the quantum programming area  in our opinion deserve further studies.
\end{remark}

\begin{remark}
The theorem presented in work \cite{DHondt_Panangaden} is a special case of our theorem \ref{main_result}. If we assume that the quantum predicate is given by the corresponding $\mathbb{F} \in \mathrm{POVM}(\Sigma, \mathcal{H})$ with one atom support,
\begin{equation}
\mathbb{F}= \{ F_1 \}
\end{equation}
then our theorem gives (still with some generalisation) the D'Hondt and Panangaden result.
\end{remark}

\begin{remark}
In the case of quantum programms defined as completely positive maps the infinite dimensional version  (the $C^{\star}$ version) of the Kraus theorem known as Stinespring representation theorem (saying that any unital CP map on $C^{\star}$-algebra is the compression of some inner $\star$-homorphism) can be used instead of the use of Hilbert-Schmidt duality \cite{Paulsen2003}. However the corresponding constructions are much more complicated as we have to pass to the corresponding dilation spaces.
\end{remark}

\section{Summary and conclusions}

The most general notion of quantum predicate using the notion (connected to an a priori noisy measurement process) of positive operator valued measures has been introduced in this note.

The existence of the corresponding quantum weakest preconditions has been proved. Additionally, our result is valid in infinite dimensional situations and for positive but not necessary completely positive quantum programs.

It would be of great importance to provide some examples showing that our generalised quantum predicate notion can be used for semantic analysis of quantum programs. Especially important seems to be the question of providing interesting examples where previously known tools and results are not directly applicable. This will be a main topic of a forthcoming paper \cite{Gielerak_and_Sawerwain}.

\end{document}